 \definecolor{BLACK}{gray}{0}
 \definecolor{WHITE}{gray}{1}
 \definecolor{RED}{rgb}{1,0,0}
 \definecolor{GREEN}{rgb}{0,1,0}
 \definecolor{BLUE}{rgb}{0,0,1}
 \definecolor{CYAN}{cmyk}{1,0,0,0}
 \definecolor{MAGENTA}{cmyk}{0,1,0,0}
 \definecolor{YELLOW}{cmyk}{0,0,1,0}
\theoremstyle{plain}
\newtheorem{thm}{Theorem}  
 \newtheorem{lem}{Lemma}
\newtheorem{prop}{Proposition} 
\begin{document}

\title{{\Large Operational measure of incompatibility of noncommuting observables}}

\author{Somshubhro Bandyopadhyay}

\affiliation{Department of Physics and Center for Astroparticle Physics and Space
Science, Bose Institute, Block EN, Sector V, Bidhan Nagar, Kolkata
700091, India }

\email{email: som@bosemain.boseinst.ac.in}

\author{Prabha Mandayam }

\affiliation{The Institute of Mathematical Sciences, C. I. T. Campus, Taramani,
Chennai 600113, India }

\email{email: prabhamd@imsc.res.in}
\begin{abstract}
Uncertainty relations are often considered to be a measure of incompatibility
of noncommuting observables. However, such a consideration is not
valid in general, motivating the need for an alternate measure that
applies to any set of noncommuting observables. We present an operational
approach to quantifying incompatibility without invoking uncertainty
relations. Our measure aims to capture the incompatibility of noncommuting
observables as manifest in the nonorthogonality of their eigenstates.
We prove that this measure has all the desired properties. It is zero
when the observables commute, strictly greater than zero when they
do not, and is maximum when they are mutually unbiased. We also obtain
tight upper bounds on this measure for any $N$ noncommuting observables
and compute it exactly when the observables are mutually unbiased. 
\end{abstract}
\maketitle
In quantum theory, any observable or a set of commuting observables
can in principle be measured with any desired precision. This is because
commuting observables have a complete set of simultaneous eigenkets,
and therefore, measurement of one does not disturb the measurement
result obtained for the other. This no longer holds when the observables
do not commute. Noncommuting observables do not have a complete set
of common eigenkets, and therefore it is impossible to specify definite
values simultaneously. This is the essence of the celebrated uncertainty
principle \cite{heisenberg,Robertson,note-robertson}. Uncertainty
relations \cite{Deutsch,heisenberg,Masen-Uffnik,Robertson,beckner,birula-1,birula-2,Hayden,Hirschmann,Ivanovic,Prabha,sanchez,sanchez-R-1,sanchez-R-2,Wehner-Winter,wehner-winter-II}
express the uncertainty principle in a quantitative way by providing
a lower bound on the {}``uncertainty'' in the result of a simultaneous
measurement of noncommuting observables. 

Observables are defined to be compatible when they commute, and incompatible
when they do not. The uncertainty principle, therefore, is a manifestation
of the incompatibility of noncommuting observables. Despite the conceptual
importance of incompatible observables and applications of such observables
in quantum state determination \cite{Ivanovic-MUB,MUB-Durt,Wootters-fields-MUB}
and quantum cryptography \cite{BB84,Bruss-crypto,Cerf,HBP +Peres,HBP+tittel},
there does not seem to be a good general measure of their incompatibility,
although entropic uncertainty relations have often been considered
for this purpose (see, for example, \cite{Maassen,Azarchs,ballester-Wehner,Wehner-Winter}). 

To see in what sense uncertainty relations quantify incompatibility
of noncommuting observables, consider, for example, the entropic uncertainty
relation due to Maassen and Uffink \cite{Masen-Uffnik}. For any quantum
state $\rho\in\mathcal{H}$ with $\dim\mathcal{H}=d$, and measurement
of any two observables $A$ and $B$ with eigenvectors $\left\{ |a_{i}\rangle\right\} $
and $\left\{ |b_{i}\rangle\right\} $, respectively, it was shown
that \cite{Masen-Uffnik} \begin{eqnarray}
\frac{1}{2}\left(H\left(A|\rho\right)+H\left(B|\rho\right)\right) & \geq & -\log c,\label{MU}\end{eqnarray}
where $c=\max\left|\left\langle a|b\right\rangle \right|$: $|a\rangle\in\left\{ |a_{i}\rangle\right\} ,|b\rangle\in\left\{ |b_{i}\rangle\right\} $,
and $H\left(X|\rho\right)=-\sum_{i=1}^{d}\left\langle x_{i}\left|\rho\right|x_{i}\right\rangle \log\left\langle x_{i}\left|\rho\right|x_{i}\right\rangle $
for $X\in\left\{ A,B\right\} $ is the Shannon entropy (all logarithms
are taken to base 2). Observe that the right-hand side of the above
inequality is independent of $\rho$. The incompatibility of the observables
$A$ and $B$ can be measured by either the sum of the entropies {[}left
hand side of (\ref{MU}){]} minimized over all $\rho$ (if it is not
known whether equality is achieved) or the lower bound when equality
is achieved for some state. We then say that a set of observables
is more incompatible than another if the sum (or the lower bound)
takes on a larger value. It is clear from the above inequality that
a pair of observables is most incompatible when the observables are
mutually unbiased. Incompatibility of more than two observables can
be similarly quantified via a generalized form of the inequality (\ref{MU})
\cite{Wehner-Winter} when such an inequality can be found.

However, it is easy to see why inequality (\ref{MU}) is not a satisfactory
measure of incompatibility for \emph{all} pairs of incompatible observables.
This is because both sides of the inequality could be zero even when
the observables do not commute. This happens, for example, when the
noncommuting observables $A$ and $B$ are such that they commute
on a subspace. Such observables have one or more common eigenvectors
but not all eigenvectors are common because the observables do not
commute. For such a pair of observables both sides of inequality (\ref{MU})
are identically zero even though the observables are known to be incompatible.
Thus, uncertainty relations can only quantify incompatibility when
the observables do not have any common eigenvector. This shows that
uncertainty relations cannot be considered as a valid measure of incompatibility
for \emph{all} sets of noncommuting observables, thus motivating the
present work. Furthermore, incompatibility of more than two observables
is much less understood because uncertainty relations (in cases where
they are indeed a good measure) are known only for some special classes
of observables \cite{Hayden,sanchez,sanchez-R-1,sanchez-R-2,wehner-winter-II}.
Even for these cases maximally tight uncertainty relations are not
always known to exist \cite{Wehner-Winter,Prabha}. 

In this work, we present an operational approach to quantifying the
incompatibility of any set of $N$ noncommuting observables. We first
observe that, by definition, noncommuting observables do not have
a complete set of common eigenkets. Therefore, some of the eigenstates,
if not all, corresponding to different noncommuting observables must
be nonorthogonal. We therefore suggest a measure that quantifies incompatibility
of the observables as manifest in the nonorthogonality of their eigenstates.
We show that our measure applies to any set of noncommuting observables
(even if the observables commute on a subspace) and has the following
desirable properties. It is zero when the observables commute, strictly
greater than zero when they do not (note that the approach based on
an uncertainty relation fails in this regard), and maximum when they
are mutually unbiased. We also obtain nontrivial upper bounds for
any $N$ noncommuting observables, and show that they are tight when
$N\leq d+1$. We prove the latter by computing the measure exactly
for any $N$ mutually unbiased observables. 

In order to define our measure of incompatibility, we adopt the following
operational approach, best understood in the setting of quantum cryptography.
We imagine a quantum key distribution (QKD) protocol between two observers,
say, Alice and Bob, in presence of an eavesdropper employing an intercept-resend
attack. Alice transmits quantum states drawn randomly from an ensemble
(signal ensemble) $S$ of equiprobable pure states, where the pure
states are taken to be the eigenstates of the noncommuting observables
whose incompatibility we wish to quantify. The eavesdropper performs
a fixed measurement on every intercepted state (we assume that all
transmitted states are intercepted), replaces the original state with
some other state based on the measurement outcome, and sends it on
to Bob. Our measure is defined as the complement of the accessible
fidelity \cite{Fuchs=Sasaki,Fuchs} (the best possible average fidelity
an eavesdropper can obtain) of the set $S$. Intuitively, this measure
corresponds to the {}``amount of information'' that is \emph{inaccessible}
to an eavesdropper. 

For any given set $\Pi=\left\{ \Pi^{1},\Pi^{2},...,\Pi^{N}\right\} $
of $N$ noncommuting observables acting on a Hilbert space $\mathcal{H}_{d}$
of dimension $d$, the signal ensemble is defined as a set of pure
states $S\left(\Pi\right)=\left\{ \Pi_{j}^{i}=|\psi_{j}^{i}\rangle\langle\psi_{j}^{i}|\right\} $,
with $i=1,...,N$ and $j=1,...,d$, where $|\psi_{j}^{i}\rangle$
is the $j^{th}$ eigenvector of the observable $\Pi^{i}$. As explained
before, Alice transmits pure states $\Pi_{j}^{i}$ drawn randomly
from the set $S\left(\Pi\right)$ (probability of each state being
equal to $1/Nd$) in the presence of an eavesdropper employing an
intercept-resend strategy comprising of some measurement (POVM) $\mathbf{M}$
and a state reconstruction map $\mathbf{A}$. First we define the
notions of average and accessible fidelity (see Refs. \cite{Fuchs=Sasaki,Fuchs}
for definitions in a more general scenario). For a measurement $\mathbf{M}=\left\{ M_{a}\right\} $,
and a state reconstruction procedure $\mathbf{A}:a\rightarrow\sigma_{a}$
such that when the measurement outcome is $a$, the eavesdropper substitutes
the intercepted state with the state $\sigma_{a}$ and sends this
state to Bob, the \emph{average fidelity} of $S\left(\Pi\right)$
is given by:\begin{eqnarray}
F_{S\left(\Pi\right)}\left(\mathbf{M},\mathbf{A}\right) & = & \frac{1}{Nd}\sum_{ija}\mbox{Tr}\left(\Pi_{j}^{i}M_{a}\right)\mbox{Tr}\left(\Pi_{j}^{i}\sigma_{a}\right),\label{F-avg-N}\end{eqnarray}
where $\frac{1}{Nd}\mbox{Tr}\left(\Pi_{j}^{i}M_{a}\right)$ is the
joint probability for the state $\Pi_{j}^{i}$ and outcome $a$ of
the measurement, and $\mbox{Tr}\left(\Pi_{j}^{i}\sigma_{a}\right)$
is the fidelity achieved in this case. The \emph{optimal fidelity}
is obtained by maximizing the average fidelity over all measurements
and state reconstruction procedures: \begin{eqnarray}
F_{S\left(\Pi\right)} & = & \sup_{\mathbf{M}}\sup_{\mathbf{A}}\frac{1}{Nd}\sum_{ija}\mbox{Tr}\left(\Pi_{j}^{i}M_{a}\right)\mbox{Tr}\left(\Pi_{j}^{i}\sigma_{a}\right).\label{F-optimal}\end{eqnarray}
The optimal fidelity represents the best possible average fidelity
an eavesdropper can obtain. The measure of incompatibility of the
noncommuting observables in the set $\Pi$ is now defined as\emph{
\begin{eqnarray}
Q\left(\Pi\right) & = & 1-F_{S\left(\Pi\right)}.\label{Q-def}\end{eqnarray}
}It is clear from the definition that the measure is applicable even
when the noncommuting observables $\left\{ \Pi^{i}\right\} $ have
one or more common eigenvectors. We will say that a set of observables
$\Pi_{1}$ is more incompatible than another, say, $\Pi_{2}$, if
the former takes on a larger $Q$ value. It is interesting to note
that the comparison holds regardless of the number of observables
in each set and the dimension of the Hilbert space. 

For any set $\Pi$ of $N$ noncommuting observables, $Q\left(\Pi\right)$
can in principle be computed but requires optimization which may be
difficult to perform in general. Nevertheless, we give a simplified
expression of a closely related quantity which might be useful in
computing the measure for special classes of observables. We further
note that our formalism is completely general in the sense that it
can be applied to observables not all of which are commuting. Suppose
we have a set $\mathfrak{S}$ of $\mathfrak{N}$ observables, in which
some observables do not commute. From such a set one can always construct
a minimal subset $S$ of $N\leq\mathfrak{N}$ noncommuting observables
with the property that any observable that is not in $S$ must commute
with at least one observable in $S$. For example, if $N=1$, then
it means that all observables in $\mathfrak{S}$ commute with each
other, whereas $N=\mathfrak{N}$ implies that all observables in $\mathfrak{S}$
are noncommuting. Incompatibility of any set of observables is then
defined as the incompatibility of the minimal noncommuting set obtained
in this fashion. 

The remainder of the paper is arranged as follows. We begin by proving
two basic properties of $Q$ (Proposition 1) and obtain the upper
bounds (Theorem 1). We will then derive a simplified expression of
a quantity closely related to optimal fidelity (Theorem 2) and use
it to compute $Q\left(\Pi\right)$ exactly for any $N$ mutually unbiased
observables (Theorem 3). The result in Theorem 3 implies that the
upper bounds in Theorem 1 are tight. Finally we conclude with implications
of these results in quantum cryptography and suggest future directions
of research. 

\begin{prop} $Q=0$ for commuting observables and $Q>0$ when the
observables do not commute. \end{prop}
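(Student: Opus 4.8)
The plan is to analyze the two directions separately, using the operational definition of $Q$ via the optimal fidelity $F_{S(\Pi)}$.

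For the commuting case, the goal is to exhibit an intercept-resend strategy achieving $F_{S(\Pi)}=1$. If all observables in $\Pi$ commute, they share a complete orthonormal eigenbasis $\{|e_k\rangle\}_{k=1}^d$, and every signal state $\Pi_j^i = |\psi_j^i\rangle\langle\psi_j^i|$ coincides with some rank-one projector $|e_k\rangle\langle e_k|$. The eavesdropper takes the von Neumann measurement $\mathbf{M}=\{|e_k\rangle\langle e_k|\}$ and the reconstruction map $k\mapsto\sigma_k=|e_k\rangle\langle e_k|$. Then for each signal $\Pi_j^i=|e_{k(i,j)}\rangle\langle e_{k(i,j)}|$ we get $\sum_a \mathrm{Tr}(\Pi_j^i M_a)\mathrm{Tr}(\Pi_j^i\sigma_a)=1$, so $F_{S(\Pi)}=1$ and $Q(\Pi)=0$. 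Since $F_{S(\Pi)}\le 1$ always (it is an average of fidelities, each at most $1$), this is optimal and the commuting direction is complete.

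For the noncommuting case, the goal is the strict bound $F_{S(\Pi)}<1$. The key observation is that $F_{S(\Pi)}=1$ would force, for every signal state $\rho=\Pi_j^i$, the chain of equalities $\sum_a\mathrm{Tr}(\rho M_a)\mathrm{Tr}(\rho\sigma_a)=1$; since $\sum_a\mathrm{Tr}(\rho M_a)=1$ and each $\mathrm{Tr}(\rho\sigma_a)\le 1$, this requires $\mathrm{Tr}(\rho\sigma_a)=1$ whenever $\mathrm{Tr}(\rho M_a)>0$, i.e. $\sigma_a=\rho$ on the support of every outcome that the signal $\rho$ can trigger. Now since the observables do not all commute, there exist two signal states $\rho_1=|\psi\rangle\langle\psi|$, $\rho_2=|\phi\rangle\langle\phi|$ from different bases with $0<|\langle\psi|\phi\rangle|<1$ (they are neither orthogonal nor identical, precisely because no common complete eigenbasis exists). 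Any outcome $a$ with $\mathrm{Tr}(\rho_1 M_a)>0$ and $\mathrm{Tr}(\rho_2 M_a)>0$ would then need $\sigma_a=\rho_1$ and $\sigma_a=\rho_2$ simultaneously, a contradiction; hence the supports of $M_a$ must separate $\rho_1$ from $\rho_2$. But $M_a\ge 0$ and $\mathrm{Tr}(\rho_1 M_a)=0$ forces $M_a|\psi\rangle=0$, and likewise $\mathrm{Tr}(\rho_2 M_a)=0$ forces $M_a|\phi\rangle=0$; since $\sum_a M_a = \mathbb{1}$, summing over the outcomes triggered by $\rho_1$ gives a projector-like operator annihilating $|\phi\rangle$ yet acting as identity on $|\psi\rangle$, which is impossible once $\langle\psi|\phi\rangle\ne 0$ (apply both sides to $|\psi\rangle$ and pair with $\langle\phi|$). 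This contradiction shows $F_{S(\Pi)}<1$, hence $Q(\Pi)>0$.

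The main obstacle is making the noncommuting argument fully rigorous in the POVM setting rather than for projective measurements: one must carefully track that the reconstruction states $\sigma_a$ need only match $\rho$ on the relevant support, and handle the possibility that many outcomes contribute fractionally to a given signal. The cleanest route is the contradiction above — assume $F_{S(\Pi)}=1$, extract the pointwise conditions $\mathrm{Tr}(\rho\sigma_a)=1$ for all $a$ with $\mathrm{Tr}(\rho M_a)>0$, and derive a contradiction from a single pair of non-orthogonal, non-identical signal states, whose existence is guaranteed by noncommutativity. One should also confirm that the supremum defining $F_{S(\Pi)}$ is attained (or argue with a limiting sequence), so that "$F_{S(\Pi)}=1$" genuinely yields an exact maximizer to which the equality analysis applies; a compactness argument over POVMs and reconstruction maps on the finite-dimensional space $\mathcal{H}_d$ handles this.
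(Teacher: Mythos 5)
Your proof is correct and follows essentially the same route as the paper: commuting observables yield an orthogonal signal ensemble (hence $F_{S(\Pi)}=1$), while noncommuting ones force a nonorthogonal, non-identical pair of signal states (hence $F_{S(\Pi)}<1$). The paper merely asserts the second step by appealing to the imperfect distinguishability of nonorthogonal states, whereas you prove it explicitly via the equality-case analysis of the POVM; the only points to tighten are the attainment of the supremum (which you flag) and, in the commuting case, the fact that the paper's convention reduces to the minimal set with $N=1$, so that degenerate commuting observables cannot smuggle nonorthogonal eigenvectors into $S(\Pi)$.
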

\begin{proof}
If the observables commute, then they have a complete set of common
eigenkets which form an orthonormal basis. Thus, the minimal noncommuting
set $\Pi$ has only one element (any member of the commuting set),
i.e., $N=1$ and the set $S\left(\Pi\right)$ consists only of the
common eigenkets which are mutually orthogonal. This implies that
the optimal fidelity as defined by Eq. (\ref{F-optimal}) is $1$,
and therefore $Q=0$.

When the observables do not commute, then the minimal noncommuting
set $\Pi$ has at least two noncommuting observables. Then some of
the eigenstates in $S\left(\Pi\right)$, if not all, belonging to
different noncommuting observables must be nonorthogonal. Because
non-orthogonal states cannot be distinguished perfectly, we have $F_{S\left(\Pi\right)}<1$,
and therefore, $Q>0$. This completes the proof.
\end{proof}
We now obtain upper bounds on $Q\left(\Pi\right)$. The bounds are
tight for mutually unbiased observables as shown in Theorem 3. 

\begin{thm} The following bounds hold for a set $\Pi$ of $N$ noncommuting
observables acting on $\mathcal{H}_{d}$ with $\dim\mathcal{H}_{d}=d$:
\begin{eqnarray}
Q\left(\Pi\right) & \leq & \left(1-\frac{1}{N}\right)\left(1-\frac{1}{d}\right),\quad N\leq d+1\label{thm-1-eq-1}\\
Q\left(\Pi\right) & \leq & \frac{d-1}{d+1},\quad N\geq d+1\label{thm-1-eq-2}\end{eqnarray}
 \end{thm}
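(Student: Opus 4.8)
The plan is to bound the optimal fidelity $F_{S(\Pi)}$ from below, since $Q = 1 - F_{S(\Pi)}$ and an upper bound on $Q$ is a lower bound on $F_{S(\Pi)}$. Wait — that is backwards: to \emph{upper} bound $Q$ we need a \emph{lower} bound on $F_{S(\Pi)}$, which is easy (just exhibit a good eavesdropping strategy), but that would not give the stated inequalities in the right direction unless... Let me reconsider: $Q \le (1-1/N)(1-1/d)$ means $F_{S(\Pi)} \ge 1 - (1-1/N)(1-1/d)$. So the task is genuinely to \textbf{construct} an explicit measure-and-resend strategy $(\mathbf{M},\mathbf{A})$ achieving average fidelity at least $1 - (1-1/N)(1-1/d)$, for \emph{any} set of $N$ noncommuting observables with $N \le d+1$.

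First I would try the most natural strategy: pick one of the $N$ observables uniformly at random, say $\Pi^k$, measure it (projective measurement $\{\Pi^k_j\}_{j=1}^d$), obtain outcome $j$, and resend the eigenstate $|\psi^k_j\rangle$. This is a valid POVM/reconstruction pair when we package the random choice of $k$ into the POVM $M_{(k,j)} = \frac{1}{N}\Pi^k_j$ with $\sigma_{(k,j)} = \Pi^k_j$. The average fidelity becomes $\frac{1}{Nd}\sum_{i,j}\frac{1}{N}\sum_{k,\ell} \mathrm{Tr}(\Pi^i_j \Pi^k_\ell)^2$. Splitting off the terms with $k=i$ (which contribute $\mathrm{Tr}(\Pi^i_j\Pi^i_\ell)^2 = \delta_{j\ell}$, giving a clean $\frac{1}{N}$ per signal) from the cross terms $k\ne i$ (which are nonnegative), one immediately gets $F_{S(\Pi)} \ge \frac{1}{N}$. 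That is too weak; the bound wanted is $1 - (1-1/N)(1-1/d) = \frac{1}{N} + \frac{1}{d} - \frac{1}{Nd}$, strictly larger. So the crude estimate needs to be improved by carefully keeping the positive cross-term contributions and the diagonal-within-a-basis terms.

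The key step is therefore an averaging/symmetry argument over the cross terms $\sum_{k\ne i}\sum_\ell \mathrm{Tr}(\Pi^i_j\Pi^k_\ell)^2$. I would use that for each fixed basis $k$, $\sum_\ell \Pi^k_\ell = \mathbb{I}$, hence $\sum_\ell \langle \psi^i_j|\Pi^k_\ell|\psi^i_j\rangle = 1$ and $\sum_\ell \langle\psi^i_j|\Pi^k_\ell|\psi^i_j\rangle^2 \ge \frac{1}{d}$ by Cauchy--Schwarz (the minimum of $\sum p_\ell^2$ over probability vectors of length $d$ is $1/d$). That gives, for each of the $N-1$ choices $k \ne i$, a contribution of at least $\frac{1}{d}$, so summing and dividing: $F_{S(\Pi)} \ge \frac{1}{N}\big(1 + (N-1)\frac{1}{d}\big) = \frac{1}{N} + \frac{N-1}{Nd}$, and one checks this equals exactly $1 - (1-1/N)(1-1/d)$. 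This proves \eqref{thm-1-eq-1}. For \eqref{thm-1-eq-2}, when $N \ge d+1$, the monotonicity of $1 - (1-1/N)(1-1/d)$ in $N$ is the wrong way (it decreases as $N$ grows, making the bound on $Q$ increase), so I instead note that one can always restrict attention to a sub-ensemble: measuring only with respect to $d+1$ of the bases (or rather, since the signal ensemble is fixed, I need a different argument). The cleanest route is: the optimal fidelity for the full ensemble $S(\Pi)$ with $N$ observables is bounded below by the strategy that picks $k$ uniformly from \emph{all} $N$ bases but where now for $k \ne i$ we still have the $\frac{1}{d}$ bound — giving $\frac{1}{N} + \frac{N-1}{Nd} \ge \frac{1}{d} - \frac{1}{d}\cdot\frac{1}{d}+\cdots$; in the limit $N\to\infty$ this tends to $\frac{1}{d}$, so $Q \le 1 - \frac{1}{d}$, which is weaker than $\frac{d-1}{d+1}$. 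Hence \eqref{thm-1-eq-2} requires a genuinely better strategy than uniform measure-and-resend — presumably a covariant POVM with more outcomes (e.g. a SIC-type or continuous POVM) whose resend states achieve fidelity $\frac{2}{d+1}$ for an arbitrary pure state, the well-known optimal cloning/estimation value; I would invoke that any single pure state $|\phi\rangle$ can be "measured and resent" with average fidelity $\frac{2}{d+1}$ uniformly, and since every signal is pure, the same holds for the ensemble, giving $F_{S(\Pi)} \ge \frac{2}{d+1}$ hence $Q \le \frac{d-1}{d+1}$.

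The main obstacle I anticipate is handling \eqref{thm-1-eq-2} correctly: one must be careful that the "$\frac{2}{d+1}$ universal estimation fidelity" strategy is genuinely independent of $N$ and of the particular observables, and that it indeed dominates the naive strategy precisely in the regime $N \ge d+1$ (the two bounds \eqref{thm-1-eq-1} and \eqref{thm-1-eq-2} agreeing at $N = d+1$, where $(1-\frac{1}{d+1})(1-\frac{1}{d}) = \frac{d-1}{d+1}$, which is a reassuring consistency check). A secondary subtlety is verifying that the random-basis-choice strategy is a legitimate instance of the POVM-plus-reconstruction framework of \eqref{F-avg-N}–\eqref{F-optimal}; this is routine but should be stated. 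The Cauchy--Schwarz step for \eqref{thm-1-eq-1} is the technical heart and is short, so the write-up should foreground the strategy construction and then dispatch the inequality cleanly.
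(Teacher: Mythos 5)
Your proposal is correct and follows essentially the same route as the paper: an explicit measure-in-an-eigenbasis-and-resend strategy combined with $\sum_{\ell}p_{\ell}^{2}\ge 1/d$ for probability vectors yields $F_{S(\Pi)}\ge\frac{N+d-1}{Nd}$ for the first bound (the paper fixes a single basis $\Pi^{k}$ rather than randomizing over all $N$, but the resulting sum and estimate are identical), and the second bound is obtained, exactly as in the paper, from the universal lower bound $F\ge\frac{2}{d+1}$ on the accessible fidelity of any pure-state ensemble. No gaps.
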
 

Before we get to the proof, we would like to point out that both bounds
hold for any $N$. However, they are competing in the sense that one
is better than the other depending on whether $N<d+1$ or $N>d+1$,
and are equal when $N=d+1$. 
\begin{proof}
We first prove inequality (\ref{thm-1-eq-1}). We pick a measurement
$\mathbf{M}$ and a state reconstruction strategy $\mathbf{A}$ to
obtain a lower bound on the average fidelity {[}Eq. (\ref{F-avg-N}){]};
the result then follows from the definitions of optimal fidelity and
$Q\left(\Pi\right)$. The measurement $\mathbf{M}$ that we choose
is the standard von Neumann measurement in the eigenbasis of some
observable $\Pi^{k}\in\Pi$.  Thus, the measurement $\mathbf{M}=\left\{ \Pi_{l}^{k}=|\psi_{l}^{k}\rangle\langle\psi_{l}^{k}|\right\} _{l=1}^{d}$
consists of rank-1 orthogonal projection operators satisfying $\mbox{Tr}\left(\Pi_{j}^{k}\Pi_{l}^{k}\right)=\delta_{jl}$
and $\sum_{l}\Pi_{l}^{k}=\mathbb{I}.$ The state reconstruction map
$\mathbf{A}$ reproduces the state $\Pi_{l}^{k}$ if the outcome is
$l$. With this, one can show that (details in the Appendix) \begin{eqnarray}
F_{S\left(\Pi\right)}\left(\mathbf{M},\mathbf{A}\right) & \geq & \frac{N+d-1}{Nd}.\label{F-avg-proof-thm-1}\end{eqnarray}
Noting that, by definition, $F_{S\left(\Pi\right)}\geq F_{S\left(\Pi\right)}\left(\mathbf{M},\mathbf{A}\right)$,
we get \begin{eqnarray}
F_{S(\left(\Pi\right)} & \geq & \frac{N+d-1}{Nd}.\label{f-opt-lower-bound}\end{eqnarray}
Inequality (\ref{thm-1-eq-1}) now follows from the definition of
$Q\left(\Pi\right)$. To prove the upper bound in (\ref{thm-1-eq-2})
we simply use a lower bound on the best possible average fidelity
(accessible fidelity in the terminology of \cite{Fuchs=Sasaki,Fuchs})
obtained for any pure state ensemble $\mathcal{E}=\left\{ |\psi_{i}\rangle,p_{i}\right\} $
\cite{Fuchs}, \begin{eqnarray*}
F_{\mathcal{E}} & \geq & \frac{2}{d+1},\end{eqnarray*}
from which the result follows by definition of $Q\left(\Pi\right)$. 
\end{proof}
Ideally, we would like to compute $Q\left(\Pi\right)$ for any set
$\Pi$. Unfortunately, there is no straightforward way to do the optimization
in Eq. (\ref{F-optimal}). Nevertheless, we hope to get some insight
to the problem by obtaining a simplified form of the so-called\emph{
}achievable fidelity \cite{Fuchs,Fuchs=Sasaki} obtained by maximizing
the average fidelity over all state reconstruction strategies: \begin{eqnarray}
F_{S\left(\Pi\right)}\left(\mathbf{M}\right) & = & \sup_{\mathbf{A}}\frac{1}{Nd}\sum_{ija}\mbox{Tr}\left(\Pi_{j}^{i}M_{a}\right)\mbox{Tr}\left(\Pi_{j}^{i}\sigma_{a}\right).\label{F-achievable}\end{eqnarray}
As one can easily see, the optimal fidelity {[}Eq. (\ref{F-optimal}){]}
can now be expressed as \begin{eqnarray}
F_{S\left(\Pi\right)} & = & \sup_{\mathbf{M}}F_{S\left(\Pi\right)}\left(\mathbf{M}\right).\label{F-accessible}\end{eqnarray}
We will assume, without any loss of generality that the POVM $\mathbf{M}=\left\{ M_{a}\right\} $
consists only of rank one elements: $M_{a}=m_{a}\chi_{a},$ where
$\chi_{a}=|\chi_{a}\rangle\langle\chi_{a}|$ is the density matrix
corresponding to the normalized vector $|\chi_{a}\rangle$. For any
such measurement one can calculate the achievable fidelity explicitly
\cite{Fuchs=Sasaki}: \begin{eqnarray}
F_{S\left(\Pi\right)}\left(\mathbf{M}\right) & = & \sum_{a}m_{a}\lambda\left(\Phi\left(\chi_{a}\right)\right)\label{f-ach-1}\end{eqnarray}
where $\Phi$ is a trace non-increasing completely positive linear
map whose action on any density matrix $\varrho$ is given by \begin{eqnarray}
\Phi\left(\varrho\right) & = & \frac{1}{Nd}\sum_{i}\Pi_{j}^{i}\varrho\Pi_{j}^{i}\label{PHI}\end{eqnarray}
and $\lambda\left(\Phi\left(\varrho\right)\right)$ is the largest
eigenvalue of the Hermitian operator $\Phi\left(\varrho\right)$. 

\begin{thm} For any $S\left(\Pi\right)$, and a measurement $\mathbf{M}=\left\{ M_{a}=m_{a}\chi_{a}\right\} $
the achievable fidelity is given by \begin{eqnarray}
F_{S\left(\Pi\right)}\left(\mathbf{M}\right) & = & \frac{1}{Nd}\sum_{ij}p\left(a\right)_{j}^{i}q\left(a\right)_{j}^{i}\label{f-ach-3}\end{eqnarray}
where $p\left(a\right)_{j}^{i}=\mbox{Tr}\left(\Pi_{j}^{i}\chi_{a}\right)$,
$q\left(a\right)_{j}^{i}=\langle\eta_{a}|\Pi_{j}^{i}|\eta_{a}\rangle$,
and $|\eta_{a}\rangle$ is the eigenvector corresponding to the largest
eigenvalue of $d\Phi\left(\chi_{a}\right)$. \end{thm}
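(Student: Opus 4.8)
The plan is to take equation~(\ref{f-ach-1}) as the starting point --- it already expresses the achievable fidelity of a rank-one POVM as $\sum_a m_a\,\lambda\!\left(\Phi(\chi_a)\right)$ --- so that the whole task reduces to rewriting the largest eigenvalue $\lambda\!\left(\Phi(\chi_a)\right)$ in terms of its eigenvector. It is worth recalling in one line where~(\ref{f-ach-1}) comes from: conditioned on outcome $a$, the $\sigma_a$-dependent part of the average fidelity~(\ref{F-achievable}) is $\tfrac{1}{Nd}\,\mathrm{Tr}\!\left(\sigma_a R_a\right)$ with $R_a=\sum_{ij}\mathrm{Tr}\!\left(\Pi_j^i M_a\right)\Pi_j^i$; since $\max_{\sigma}\mathrm{Tr}(\sigma X)=\lambda_{\max}(X)$ for Hermitian $X$, the optimal reconstruction is the rank-one projector onto the top eigenvector of $R_a$, and a short computation using $\Pi_j^i\chi_a\Pi_j^i=\mathrm{Tr}(\Pi_j^i\chi_a)\,\Pi_j^i$ gives $R_a=Nd\,m_a\,\Phi(\chi_a)$, whence $\tfrac1{Nd}\lambda_{\max}(R_a)=m_a\lambda(\Phi(\chi_a))$ and summing over $a$ yields~(\ref{f-ach-1}).

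Next I would invoke the Rayleigh--Ritz characterization of the largest eigenvalue. Let $|\eta_a\rangle$ be the normalized eigenvector for the largest eigenvalue of $d\,\Phi(\chi_a)$; this is the same vector as the top eigenvector of $\Phi(\chi_a)$, the factor $d$ being a harmless rescaling (in fact $\mathrm{Tr}\,\Phi(\varrho)=\tfrac1d\mathrm{Tr}\,\varrho$, so $d\Phi$ is trace preserving). Then $\lambda\!\left(\Phi(\chi_a)\right)=\langle\eta_a|\Phi(\chi_a)|\eta_a\rangle$, and substituting the definition~(\ref{PHI}) of $\Phi$ gives
\[
\lambda\!\left(\Phi(\chi_a)\right)=\frac{1}{Nd}\sum_{ij}\langle\eta_a|\,\Pi_j^i\,\chi_a\,\Pi_j^i\,|\eta_a\rangle .
\]

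The final step exploits the rank-one structure $\Pi_j^i=|\psi_j^i\rangle\langle\psi_j^i|$: each summand factors as $\langle\eta_a|\Pi_j^i|\eta_a\rangle\,\langle\psi_j^i|\chi_a|\psi_j^i\rangle=q(a)_j^i\,p(a)_j^i$ with $p(a)_j^i=\mathrm{Tr}(\Pi_j^i\chi_a)$ and $q(a)_j^i=\langle\eta_a|\Pi_j^i|\eta_a\rangle$ exactly as in the statement. Hence $\lambda\!\left(\Phi(\chi_a)\right)=\tfrac1{Nd}\sum_{ij}p(a)_j^i q(a)_j^i$, and plugging this back into~(\ref{f-ach-1}) produces the achievable fidelity $\sum_a m_a\,\tfrac1{Nd}\sum_{ij}p(a)_j^i q(a)_j^i$, which is~(\ref{f-ach-3}).

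I do not expect a genuine obstacle here: once~(\ref{f-ach-1}) is granted, the proof is an unpacking of definitions together with the elementary identity $\max_\sigma\mathrm{Tr}(\sigma X)=\lambda_{\max}(X)$. The only points that deserve a careful word are the normalization bookkeeping --- in particular that passing from $\Phi(\chi_a)$ to $d\,\Phi(\chi_a)$ leaves the eigenvector $|\eta_a\rangle$ unchanged, so that $q(a)_j^i$ is well defined by the stated recipe --- and the remark that rank-one POVM elements suffice, which has already been argued before Theorem~2.
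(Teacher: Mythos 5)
Your proposal is correct and follows essentially the same route as the paper: both evaluate $\lambda\left(\Phi\left(\chi_{a}\right)\right)$ as the Rayleigh quotient $\langle\eta_{a}|\Phi\left(\chi_{a}\right)|\eta_{a}\rangle$ at the top eigenvector, use the rank-one identity $\Pi_{j}^{i}\chi_{a}\Pi_{j}^{i}=\mbox{Tr}\left(\Pi_{j}^{i}\chi_{a}\right)\Pi_{j}^{i}$ to factor each term into $p\left(a\right)_{j}^{i}q\left(a\right)_{j}^{i}$, and substitute back into Eq.~(\ref{f-ach-1}); the paper merely phrases this via the rescaled density matrix $\rho\left(\Phi,\chi_{a}\right)=d\Phi\left(\chi_{a}\right)$, a harmless difference you correctly note. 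Your final expression retains the sum $\sum_{a}m_{a}$, which is also what the paper's own derivation (and its use in Theorem~3) produces, so the apparent absence of that sum in the displayed statement is a typographical slip rather than a discrepancy in your argument.
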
 
\begin{proof}
Using Eq. (\ref{PHI}) we can write $\Phi\left(\chi_{a}\right)$ as
\begin{eqnarray}
\Phi\left(\chi_{a}\right) & = & \frac{1}{Nd}\sum_{ij}\Pi_{j}^{i}\chi_{a}\Pi_{j}^{i}\nonumber \\
 & = & \frac{1}{Nd}\sum_{ij}\mbox{Tr}\left(\Pi_{j}^{i}\chi_{a}\right)\Pi_{j}^{i}\label{phi-chi}\end{eqnarray}
Observe that $d\Phi\left(\chi_{a}\right)$ is a density matrix. Let
us call it $\rho\left(\Phi,\chi_{a}\right)$, and the probabilities
$\mbox{Tr}\left(\Pi_{j}^{i}\chi_{a}\right)=p\left(a\right)_{j}^{i}$.
Thus, \begin{eqnarray}
\rho\left(\Phi,\chi_{a}\right) & = & \frac{1}{N}\left(\rho_{1}+\rho_{2}+\cdots+\rho_{N}\right),\label{rho-chi}\end{eqnarray}
where $\rho_{i}=\sum_{j=1}^{d}p\left(a\right)_{j}^{i}\Pi_{j}^{i}$.
Now suppose that $|\eta_{a}\rangle$ is the eigenvector of $\rho\left(\Phi,\chi_{a}\right)$
corresponding to the largest eigenvalue $\mu_{a}$. Then,\begin{eqnarray*}
\mu_{a} & = & \langle\eta_{a}|\rho\left(\Phi,\chi_{a}\right)|\eta_{a}\rangle\\
 & = & \frac{1}{N}\sum_{ij}p\left(a\right)_{j}^{i}q\left(a\right)_{j}^{i}\end{eqnarray*}
where $q\left(a\right)_{j}^{i}=\langle\eta_{a}|\Pi_{j}^{i}|\eta_{a}\rangle$.
Noting that $d\Phi\left(\chi_{a}\right)=\rho\left(\Phi,\chi_{a}\right)$,
the result follows from (\ref{f-ach-1}). 
\end{proof}
We now show that the upper bounds in Theorem 1 are tight for mutually
unbiased observables. Mutually unbiased observables are those observables
whose eigenvectors form mutually unbiased bases \cite{Bandyopadhyay-MUB,Ivanovic-MUB,MUB-Durt}.
For $N$ mutually unbiased observables, $\Pi^{1},\Pi^{2},...,\Pi^{N}$,
their eigenvectors satisfy:\begin{eqnarray}
\mbox{Tr}\left(\Pi_{j}^{i}\Pi_{k}^{i}\right) & = & \delta_{jk}\label{MUB-1}\\
\mbox{Tr}\left(\Pi_{j}^{i}\Pi_{l}^{k}\right) & = & \frac{1}{d}\;\;\mbox{when}\;\; i\neq k.\label{MUB-2}\end{eqnarray}
It is known that a complete set of $d+1$ mutually unbiased bases
exist in prime and prime powered dimensions \cite{Ivanovic-MUB,Wootters-fields-MUB,Bandyopadhyay-MUB}
. For other dimensions, however, the problem remains open. 

\begin{thm} Let $\Pi=\left\{ \Pi^{1},\Pi^{2},...,\Pi^{N}\right\} $
be a set of $N\leq d+1$ mutually unbiased observables acting on $\mathcal{H}_{d}$
with $\dim\mathcal{H}_{d}=d$. Then, \begin{eqnarray}
Q\left(\Pi\right) & = & \left(1-\frac{1}{N}\right)\left(1-\frac{1}{d}\right)\label{thm-2}\end{eqnarray}
\end{thm}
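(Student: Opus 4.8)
The plan is to combine Theorem~1 with the reverse inequality: since Theorem~1 (cf.\ Eq.~(\ref{f-opt-lower-bound})) already gives $F_{S(\Pi)}\ge\frac{N+d-1}{Nd}$, hence $Q(\Pi)\le(1-\tfrac1N)(1-\tfrac1d)$ for $N\le d+1$, it suffices to prove the matching upper bound $F_{S(\Pi)}\le\frac{N+d-1}{Nd}$. By the remark preceding Theorem~2 we may restrict to rank-one POVMs $\mathbf M=\{M_a=m_a\chi_a\}$, and Eq.~(\ref{f-ach-1}) together with the identity $d\,\Phi(\chi_a)=\rho(\Phi,\chi_a)$ from the proof of Theorem~2 gives $F_{S(\Pi)}(\mathbf M)=\tfrac1d\sum_a m_a\mu_a$, where $\mu_a$ is the largest eigenvalue of the density matrix $\rho(\Phi,\chi_a)=\tfrac1N\sum_{i=1}^N\rho_i$ with $\rho_i=\sum_j\mbox{Tr}(\Pi_j^i\chi_a)\Pi_j^i$. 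Since $\sum_a m_a=\sum_a\mbox{Tr}(M_a)=\mbox{Tr}(\mathbb{I})=d$, it is enough to establish the uniform bound $\mu_a\le\frac{N+d-1}{Nd}$ for every normalized $|\chi_a\rangle$; taking the supremum over $\mathbf M$ then yields $F_{S(\Pi)}\le\frac{N+d-1}{Nd}$.

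To bound $\mu_a=\sup_{|\eta\rangle}\tfrac1N\sum_{i,j}\mbox{Tr}(\Pi_j^i\chi_a)\langle\eta|\Pi_j^i|\eta\rangle$, I would pass to the Hilbert--Schmidt picture on Hermitian operators. Because $\mbox{Tr}(\Pi_j^i\Pi_k^i)=\delta_{jk}$, the dephasing map $\mathcal D_i(X)=\sum_j\mbox{Tr}(\Pi_j^i X)\Pi_j^i$ is the orthogonal projection onto $\mathrm{span}\{\Pi_1^i,\dots,\Pi_d^i\}\ni\mathbb{I}$; restricted to the traceless Hermitian operators it projects onto a $(d-1)$-dimensional subspace $V_i$. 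The defining MUB relation $\mbox{Tr}(\Pi_j^i\Pi_l^k)=1/d$ for $i\neq k$ is precisely the statement that $\langle\Pi_j^i-\tfrac1d\mathbb{I},\,\Pi_l^k-\tfrac1d\mathbb{I}\rangle_{\mathrm{HS}}=0$, i.e.\ $V_i\perp V_k$ for $i\neq k$ (which forces $N(d-1)\le d^2-1$, i.e.\ $N\le d+1$, the regime of the theorem). Writing $\chi_a=\tfrac1d\mathbb{I}+\chi_0$ and $|\eta\rangle\langle\eta|=\tfrac1d\mathbb{I}+\eta_0$ with $\chi_0,\eta_0$ traceless and $\|\chi_0\|_{\mathrm{HS}}^2=\|\eta_0\|_{\mathrm{HS}}^2=1-\tfrac1d$, and using $\mathcal D_i(\chi_a)=\tfrac1d\mathbb{I}+P_{V_i}\chi_0$, one finds $\sum_i\langle\eta|\mathcal D_i(\chi_a)|\eta\rangle=\sum_i\langle\mathcal D_i(\chi_a),\tfrac1d\mathbb{I}+\eta_0\rangle_{\mathrm{HS}}=\tfrac Nd+\sum_i\langle P_{V_i}\chi_0,P_{V_i}\eta_0\rangle_{\mathrm{HS}}$. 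Orthogonality of the $V_i$ collapses the last sum to $\langle P_W\chi_0,P_W\eta_0\rangle_{\mathrm{HS}}$ with $W=V_1\oplus\cdots\oplus V_N$, which by Cauchy--Schwarz is at most $\|\chi_0\|_{\mathrm{HS}}\|\eta_0\|_{\mathrm{HS}}=1-\tfrac1d$. Hence $\mu_a=\tfrac1N\sup_{|\eta\rangle}\sum_{i,j}\mbox{Tr}(\Pi_j^i\chi_a)\langle\eta|\Pi_j^i|\eta\rangle\le\tfrac1N\big(\tfrac Nd+1-\tfrac1d\big)=\frac{N+d-1}{Nd}$.

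Combining, $F_{S(\Pi)}\le\frac{N+d-1}{Nd}$, so $Q(\Pi)\ge 1-\frac{N+d-1}{Nd}=(1-\tfrac1N)(1-\tfrac1d)$, and Theorem~1 supplies the matching upper bound, establishing (\ref{thm-2}). This is consistent with an explicit saturating strategy: measuring in the eigenbasis of one $\Pi^k$ and re-sending the observed eigenstate gives average fidelity exactly $\frac{N+d-1}{Nd}$ for mutually unbiased observables, since then $\sum_{j,l}|\langle\psi_j^i|\psi_l^k\rangle|^4=1$ for $i\neq k$. The main obstacle is the lemma of the second paragraph, and within it the observation that mutual unbiasedness is exactly Hilbert--Schmidt orthogonality of the traceless dephasing subspaces $V_1,\dots,V_N$; once that is identified, the bound on $\mu_a$ is a one-line Bessel/Cauchy--Schwarz estimate and the rest is bookkeeping.
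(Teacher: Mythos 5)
Your proof is correct, and its overall architecture matches the paper's: the lower bound $F_{S(\Pi)}\ge\frac{N+d-1}{Nd}$ comes from the explicit measure-and-resend strategy of Theorem 1, and the matching upper bound comes from the achievable-fidelity formula of Theorem 2 together with the Hilbert--Schmidt geometry of mutually unbiased bases. Where you genuinely diverge is in the final estimate. The paper first applies Cauchy--Schwarz to the sum $\sum_{ij}p(a)_{j}^{i}q(a)_{j}^{i}$ viewed as a Euclidean inner product in $\mathbb{R}^{Nd}$, reducing the problem to the \emph{diagonal} bound $\sum_{ij}(t_{j}^{i})^{2}\le\frac{N+d-1}{d}$ for a single pure state (Lemma 1), which it then proves via the orthogonal decomposition $\mathcal{W}=\mathcal{W}_{1}\oplus\cdots\oplus\mathcal{W}_{d+1}$ of traceless Hermitian matrices. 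You instead bound the bilinear form directly: after splitting off the $\frac{1}{d}\mathbb{I}$ components you get $\sum_{ij}p_{j}^{i}q_{j}^{i}=\frac{N}{d}+\langle P_{W}\chi_{0},P_{W}\eta_{0}\rangle_{\mathrm{HS}}$ and finish with a single Cauchy--Schwarz in the Hilbert--Schmidt space. This is essentially the polarized form of the paper's Lemma 1; the key geometric fact (mutual unbiasedness $\Leftrightarrow$ orthogonality of the traceless dephasing subspaces) is identical, but your route needs only one application of Cauchy--Schwarz, avoids stating the diagonal lemma separately, and makes transparent exactly where the constraint $N\le d+1$ enters (the subspaces $V_{1},\dots,V_{N}$ must fit orthogonally inside the $(d^{2}-1)$-dimensional traceless space). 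The paper's two-step version has the minor advantage that Lemma 1 is a statement of independent interest (recovering the known purity identity at $N=d+1$), but as a proof of Theorem 3 your version is slightly leaner. One small presentational point: you should make explicit that $\sum_{j}\mathrm{Tr}(\tilde{\Pi}_{j}^{i}\chi_{0})\,\mathrm{Tr}(\tilde{\Pi}_{j}^{i}\eta_{0})=\langle P_{V_{i}}\chi_{0},P_{V_{i}}\eta_{0}\rangle_{\mathrm{HS}}$ requires the cross-term cancellation $\sum_{j}\mathrm{Tr}(\tilde{\Pi}_{j}^{i}\chi_{0})=0$ (since $\mathrm{Tr}(\tilde{\Pi}_{j}^{i}\tilde{\Pi}_{k}^{i})=\delta_{jk}-\frac{1}{d}$, the $\tilde{\Pi}_{j}^{i}$ are not orthonormal), which is the content of the paper's Lemma 2; this is implicit in your write-up but worth a line.
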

\begin{proof}
In this case $S\left(\Pi\right)=\left\{ \Pi_{j}^{i}\right\} $, with
$i=1,...,N$ and $j=1,...,d$, and the states $\Pi_{j}^{i}$ satisfy
Eqs. (\ref{MUB-1}) and (\ref{MUB-2}). Now Theorem 2 gives us an
exact expression for the achievable fidelity for any set $S\left(\Pi\right)$
and a measurement $\mathbf{M}=m_{a}\chi_{a}$. By applying the Schwartz
inequality to Eq. (\ref{f-ach-3}) one immediately obtains the following
bound on the achievable fidelity: \begin{eqnarray}
F_{S\left(\Pi\right)}\left(\mathbf{M}\right) & \leq & \frac{1}{Nd}\sum_{a}m_{a}\sqrt{\sum_{ij}\left(p\left(a\right)_{j}^{i}\right)^{2}}\sqrt{\sum_{ij}\left(q\left(a\right)_{j}^{i}\right)^{2}}\label{f-ach-3-2}\end{eqnarray}

We now use the following lemma, proof of which is given in the Appendix.

\begin{lem} Let $|\phi\rangle\in\mathcal{H}_{d}$. Let $\left\{ \Pi^{i},i=1,...,N\right\} $,
where $N\leq d+1$ be a set of mutually unbiased bases in $\mathcal{H}_{d}$.
Let $t_{j}^{i}=\langle\phi|\Pi_{j}^{i}|\phi\rangle$, where $\Pi_{j}^{i}$
is the $j^{th}$ vector of the $i^{th}$ basis. Then, \begin{eqnarray}
\sum_{i=1}^{N}\sum_{j=1}^{d}\left(t_{j}^{i}\right)^{2} & \leq & \frac{N+d-1}{d}\label{lem}\end{eqnarray}
\end{lem}

By application of Lemma 1 in inequality (\ref{f-ach-3-2}) we get,\begin{eqnarray}
F_{S\left(\Pi\right)}\left(\mathbf{M}\right) & \leq & \frac{N+d-1}{Nd^{2}}\sum_{a}m_{a}\nonumber \\
 & = & \frac{N+d-1}{Nd}\label{proof-thm-2-eq-3}\end{eqnarray}
where we have used $\sum_{a}m_{a}=d$ which follows from the fact
that the elements $\left\{ M_{a}\right\} $ of the POVM satisfy $\sum_{a}M_{a}=\mathbb{I}$.
Noting that the upper bound (\ref{proof-thm-2-eq-3}) holds irrespective
of the measurement $\mathbf{M}$, the optimal fidelity is therefore
bounded by \begin{eqnarray}
F_{S\left(\Pi\right)} & \leq & \frac{N+d-1}{Nd}.\label{proof-thm-2-eq-5}\end{eqnarray}
From the above inequality and the general lower bound on $F_{S\left(\Pi\right)}$
{[}inequality (\ref{f-opt-lower-bound}){]}, we therefore obtain ,
\begin{eqnarray}
F_{S\left(\Pi\right)} & = & \frac{N+d-1}{Nd}.\label{proof-2-eq-6}\end{eqnarray}
Eq. (\ref{thm-2}) now follows from the definition of $Q\left(\Pi\right)$. 
\end{proof}
In summary, we have pointed out that uncertainty relations cannot,
in general, be considered as a measure of incompatibility of noncommuting
observables. This observation led us to propose a measure of incompatibility
that applies to any set of noncommuting observables. The measure relies
on two simple facts: When observables do not commute, at least some
of their eigenstates must be nonorthogonal, and nonorthogonal quantum
states cannot be perfectly distinguished. The measure is shown to
satisfy the desired properties, namely, it is zero when the observables
commute and strictly greater than zero when they do not. We have also
obtained tight upper bounds for any $N$ noncommuting observables
and evaluated the measure exactly for mutually unbiased observables. 

We note that the underlying physical principle defining our measure
and the security of QKD protocols such as BB84 \cite{BB84} and its
generalizations \cite{Bruss-crypto,Cerf,HBP +Peres,HBP+Gisin,HBP+tittel}
is the same. Thus, the exact expression of incompatibility of any
$N$ mutually unbiased observables obtained here is expected to help
analyze the security of such protocols. We further note that, in recent
years entropic uncertainty relations have found applications in quantum
cryptography \cite{Damgaard,Koashi}, information locking \cite{Divincenzo},
and the separability problem \cite{Guhne}. We suspect that the results
presented here will also be useful in the aforementioned contexts.
Apart from these a recent result \cite{Spengler} shows that entanglement
can be detected by local mutually unbiased measurements. It is possible
that the results presented here could also help to obtain separability
bounds from incompatible measurements, other than mutually unbiased,
on the local subsystems. 

As a final comment, we feel that alternate measures of incompatibility
of observables should be explored for reasons outlined in the Introduction.
While this paper suggests only one such measure, the idea behind it
is quite general and it is likely that similar quantities might serve
as an equally good measure. Of course, it is hard to see how the difficulty
of general optimization could be avoided. \\

\emph{Acknowledgments:} SB thanks The Institute of Mathematical Sciences,
Chennai, for supporting his visit in June 2012, when part of this
work was completed. The authors are grateful to Bill Wootters for
his comments on an earlier version of this work.

\pagebreak

\section*{Appendix}

\subsection{Proof of Theorem 1.}

Here we will derive inequality (\ref{F-avg-proof-thm-1}). For the
choice of measurement $\mathbf{M}$ and the state reconstruction strategy
$\mathbf{A}$ discussed in the text, the average fidelity {[}Eq. (\ref{F-avg-N}){]}
is given by,\begin{eqnarray}
F_{S\left(\Pi\right)}\left(\mathbf{M},\mathbf{A}\right) & = & \frac{1}{Nd}\sum_{i,j,l}\left[\mbox{Tr}\left(\Pi_{j}^{i}\Pi_{l}^{k}\right)\right]^{2}\nonumber \\
 & = & \frac{1}{N}+\frac{1}{Nd}\sum_{i\neq k,j,l}\left|\langle\psi_{j}^{i}|\psi_{l}^{k}\rangle\right|^{4}.\label{eq:proof-thm-1-eq-2}\end{eqnarray}
Now the orthogonal states $\left\{ |\psi_{l}^{k}\rangle\right\} _{l=1}^{d}$
form a basis of the Hilbert space $\mathcal{H}_{d}$. Therefore, for
any state $|\psi_{j}^{i}\rangle$, \begin{eqnarray}
\sum_{l=1}^{d}\left|\langle\psi_{j}^{i}|\psi_{l}^{k}\rangle\right|^{2} & = & 1.\label{proof-thm-1-eq-4}\end{eqnarray}
Furthermore, for $k\neq i$, we can write \begin{eqnarray}
\left|\langle\psi_{j}^{i}|\psi_{l}^{k}\rangle\right|^{2} & = & \frac{1}{d}+\Delta_{jl}^{ik}\;:\; k\neq i,\label{proof-thm-1-eq-5}\end{eqnarray}
where $\Delta_{jl}^{ik}$ satisfies $-\frac{1}{d}\leq\Delta_{jl}^{ik}\leq1-\frac{1}{d}$.
From Eqs. (\ref{proof-thm-1-eq-5}) and (\ref{proof-thm-1-eq-4})
it follows that \begin{eqnarray}
\sum_{l=1}^{d}\Delta_{jl}^{ik} & = & 0\;:k\neq i.\label{proof-thm-1eq-8}\end{eqnarray}
Substituting (\ref{proof-thm-1-eq-5}) in Eq. (\ref{eq:proof-thm-1-eq-2})
and using Eq. (\ref{proof-thm-1eq-8}), one obtains\begin{eqnarray*}
F_{S\left(\Pi\right)}\left(\mathbf{M},\mathbf{A}\right) & = & \frac{1}{N}+\frac{1}{Nd}\sum_{i\neq k,j,l}\left(\frac{1}{d}+\Delta_{jl}^{ik}\right)^{2}\\
 & = & \frac{1}{N}+\frac{N-1}{Nd}+\frac{1}{Nd}\sum_{i\neq k,j,l}\left(\Delta_{jl}^{ik}\right)^{2}\\
 & \geq & \frac{1}{N}+\frac{N-1}{Nd}=\frac{\left(N+d-1\right)}{Nd}.\end{eqnarray*}

\subsection{Proof of Lemma 1}

The following lemma will help us to prove Lemma 1. 

\begin{lem} Let $V$ be a real vector space of dimension $\left(d-1\right)$
equipped with a inner product $\langle\psi|\phi\rangle=\sum_{k=1}^{d-1}a_{k}b_{k}$,
where $\left\{ a_{k}\right\} $ and $\left\{ b_{k}\right\} $ are
the components of the vectors $|\psi\rangle\in V$ and $|\phi\rangle\in V$
in some orthogonal basis. Let $\left\{ |\psi_{i}\rangle\right\} _{i=1}^{d}$
be a set of linearly dependent vectors spanning $V$, with the property
that $\langle\psi_{i}|\psi_{i}\rangle=\left(1-\frac{1}{d}\right)$
and $\langle\psi_{i}|\psi_{j}\rangle=-\frac{1}{d}$. Let $|\phi\rangle\in V$
such that $\sum_{i=1}^{d}\langle\psi_{i}|\phi\rangle=0$. Then $|\phi\rangle$
can be expressed as \begin{eqnarray*}
|\phi\rangle & = & \sum_{i=1}^{d}\lambda_{i}|\psi_{i}\rangle,\end{eqnarray*}
where $\lambda_{i}=\langle\phi|\psi_{i}\rangle$ are such that\begin{eqnarray*}
\langle\phi|\phi\rangle & = & \sum_{i=1}^{d}\lambda_{i}^{2}.\end{eqnarray*}
 \end{lem}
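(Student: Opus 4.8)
The plan is to exploit the rigid structure of the Gram matrix of the vectors $\{|\psi_i\rangle\}$. Collect their inner products into the real symmetric $d\times d$ matrix $G$ with $G_{ij}=\langle\psi_i|\psi_j\rangle$; the hypotheses say precisely that $G=\mathbb{I}-\frac{1}{d}J$, where $J$ is the $d\times d$ all-ones matrix. Since $J^{2}=dJ$, a one-line computation gives $G^{2}=\mathbb{I}-\frac{2}{d}J+\frac{1}{d^{2}}J^{2}=\mathbb{I}-\frac{1}{d}J=G$, so $G$ is a symmetric idempotent, i.e. an orthogonal projection. Its kernel is the span of $(1,\dots,1)^{T}$ (since $Gx=0$ forces $x=\frac{1}{d}(\sum_{k}x_{k})(1,\dots,1)^{T}$), hence its range is the hyperplane $(1,\dots,1)^{\perp}$, of dimension $d-1=\dim V$ --- consistent with the $d$ vectors $\{|\psi_i\rangle\}$ spanning $V$ subject to a single linear relation.

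Next I would read off the coefficients directly. Put $\lambda_{j}:=\langle\phi|\psi_{j}\rangle$, a real number; the hypothesis $\sum_{j}\langle\psi_{j}|\phi\rangle=0$ says exactly that the vector $\lambda=(\lambda_{1},\dots,\lambda_{d})$ lies in $(1,\dots,1)^{\perp}$, i.e. in the range of $G$. Since $G=G^{2}$, it acts as the identity on its range (if $v=Gw$ then $Gv=G^{2}w=Gw=v$), so $G\lambda=\lambda$, which written out is $\sum_{i}\lambda_{i}\langle\psi_{j}|\psi_{i}\rangle=\lambda_{j}=\langle\psi_{j}|\phi\rangle$ for every $j$. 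Hence the vector $\sum_{i}\lambda_{i}|\psi_{i}\rangle-|\phi\rangle$, which lies in $V$, is orthogonal to every $|\psi_{j}\rangle$; as the $|\psi_{j}\rangle$ span $V$ and the inner product is positive definite, it must vanish. This gives the asserted expansion $|\phi\rangle=\sum_{i}\lambda_{i}|\psi_{i}\rangle$ with $\lambda_{i}=\langle\phi|\psi_{i}\rangle$. (Incidentally, since $\sum_{i}G_{ji}=0$ for all $j$ one also has $\sum_{i}|\psi_{i}\rangle=0$ in $V$, so the hypothesis $\sum_{i}\langle\psi_{i}|\phi\rangle=0$ in fact holds automatically for every $|\phi\rangle\in V$.) The Parseval identity is then immediate: $\langle\phi|\phi\rangle=\sum_{i}\lambda_{i}\langle\phi|\psi_{i}\rangle=\sum_{i}\lambda_{i}^{2}$.

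I do not expect a genuine obstacle. The one point that needs care is that the expansion coefficients of $|\phi\rangle$ in the \emph{overcomplete} set $\{|\psi_{i}\rangle\}$ ($d$ vectors in a $(d-1)$-dimensional space) are not unique, so the real content of the lemma is that the distinguished choice $\lambda_{i}=\langle\phi|\psi_{i}\rangle$ \emph{simultaneously} reconstructs $|\phi\rangle$ and obeys Parseval; the algebraic fact that makes both happen is the idempotence $G^{2}=G$. Conceptually, the $\{|\psi_{i}\rangle\}$ form a tight (Parseval) frame for $V$ --- they are the image of an orthonormal basis of $\mathbb{R}^{d}$ under the orthogonal projection onto $(1,\dots,1)^{\perp}$ --- and the lemma is just the standard frame reconstruction identity. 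The remaining bookkeeping --- that all quantities are real, and that a vector of $V$ orthogonal to a spanning set of $V$ is zero --- is routine.
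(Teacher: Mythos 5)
Your proof is correct, but it is organized differently from the paper's. The paper expands $|\phi\rangle=\sum_i\lambda_i|\psi_i\rangle$ with \emph{a priori undetermined} coefficients, computes $\langle\psi_k|\phi\rangle$ from the given Gram data, invokes the hypothesis $\sum_i\langle\psi_i|\phi\rangle=0$ to conclude $\sum_l\lambda_l=0$, and only then identifies $\lambda_k=\langle\psi_k|\phi\rangle$ and verifies the norm identity. You instead \emph{define} $\lambda_j:=\langle\phi|\psi_j\rangle$ and use the idempotence of the Gram matrix $G=\mathbb{I}-\frac{1}{d}J$ to show these coefficients reconstruct $|\phi\rangle$ (the tight-frame reconstruction identity), after which Parseval is one line. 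Your route buys something real: your side remark that $\sum_i|\psi_i\rangle=0$, so that the hypothesis $\sum_i\langle\psi_i|\phi\rangle=0$ holds automatically for every $|\phi\rangle\in V$, is correct, and it quietly exposes a slip in the paper's own derivation. The correct computation gives $\langle\psi_k|\phi\rangle=\lambda_k-\frac{1}{d}\sum_l\lambda_l$ (not $\lambda_k+\sum_l\lambda_l$ as in Eq.\ (\ref{lem-2-eq-2})), and summing over $k$ yields the tautology $0=0$ rather than $(d+1)\sum_k\lambda_k=0$; the hypothesis therefore cannot force $\sum_l\lambda_l=0$. The paper's conclusion survives only because the expansion in the overcomplete set is non-unique (one may add a constant to all $\lambda_i$ since $\sum_i|\psi_i\rangle=0$) and one is free to \emph{choose} the representative with $\sum_l\lambda_l=0$. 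Your intrinsic definition of the $\lambda_i$ sidesteps this non-uniqueness entirely, which is exactly the point you flag as the one needing care; the paper's approach is more elementary and computational but, as written, papers over that issue.
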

\begin{proof}
As the set of vectors $\left\{ |\psi_{i}\rangle\right\} _{i=1}^{d}$
span $V$ we can write $|\phi\rangle$ as:\begin{eqnarray}
|\phi\rangle & = & \sum_{i=1}^{d}\lambda_{i}|\psi_{i}\rangle.\label{lem-2-eq-1}\end{eqnarray}
By explicitly computing the inner product of $|\phi\rangle$ with
$|\psi_{k}\rangle$ we get, \begin{eqnarray}
\langle\psi_{k}|\phi\rangle & = & \lambda_{k}+\sum_{l=1}^{d}\lambda_{l}.\label{lem-2-eq-2}\end{eqnarray}
The given condition $\sum_{i=1}^{d}\langle\psi_{i}|\phi\rangle=0$,
together with the above equation gives \begin{eqnarray}
\left(d+1\right)\sum_{k=1}^{d}\lambda_{k} & = & 0,\label{lem-2-eq-3}\end{eqnarray}
 which implies that \begin{eqnarray}
\sum_{l=1}^{d}\lambda_{l} & = & 0.\label{eq:lem-2-3.1}\end{eqnarray}
Thus {[}from Eq. (\ref{lem-2-eq-2}){]}, \begin{eqnarray}
\langle\psi_{k}|\phi\rangle & = & \lambda_{k}.\label{lem-2-eq-4}\end{eqnarray}
 We will now prove that $\langle\phi|\phi\rangle=\sum_{i=1}^{d}\lambda_{i}^{2}$
by explicitly computing the squared norm:\begin{eqnarray*}
\langle\phi|\phi\rangle & = & \sum_{kj}\lambda_{k}\lambda_{j}\langle\psi_{j}|\psi_{k}\rangle\\
 & = & \left(1-\frac{1}{d}\right)\sum\lambda_{k}^{2}-\frac{1}{d}\sum_{j\neq k}\lambda_{k}\lambda_{j}\\
 & = & \sum\lambda_{k}^{2}.\end{eqnarray*}
where to arrive at the last line we have used Eq. (\ref{eq:lem-2-3.1}).
\end{proof}
To prove Lemma 1 we first note that the MUBs lie in the set of density
matrices which itself is a convex subset of complex $d\times d$ Hermitian
matrices. The set of complex $d\times d$ Hermitian matrices forms
an $d^{2}$ - dimensional real vector space $\mathcal{V}$ equipped
with the inner product $\mbox{Tr}\left(AB\right)$ for any two vectors
$A,B\in\mathcal{V}$. The density matrices are however, of unit trace
and non-negative, and therefore lie in an $\left(d^{2}-1\right)$
- dimensional subspace of $\mathcal{V}$. This subspace is nothing
but the vector space of all Hermitian matrices of unit trace.

For our purpose we will deal with the vector space of traceless Hermitian
matrices $\mathcal{W}$ of dimension $(d^{2}-1)$ with the inner product
defined as $\mbox{Tr}\left(AB\right)$ for any two traceless Hermitian
matrices $A,B\in\mathcal{W}$. Thus a density matrix $\rho$ is represented
by $\tilde{\rho}=\rho-\frac{\mathbb{I}}{d}$. It is easy to check
that the vectors belonging to different mutually unbiased bases are
now orthogonal: That is, \begin{eqnarray*}
\mbox{Tr}\left(\tilde{\Pi}_{j}^{i}\tilde{\Pi}_{l}^{k}\right) & = & \mbox{Tr}\left(\Pi_{j}^{i}-\frac{\mathbb{I}}{d}\right)\left(\Pi_{l}^{k}-\frac{\mathbb{I}}{d}\right)\\
 & = & 0\;\;\;\; i\neq k\end{eqnarray*}
Moreover, the vectors $\left\{ \tilde{\Pi}_{j}^{i}=\Pi_{j}^{i}-\frac{\mathbb{I}}{d}\right\} :j=1,...,d$
for a given $i$, span a $(d-1)$-dimensional subspace, say, $\mathcal{W}_{i}$.
Thus when $(d+1)$ mutually unbiased bases exist, the vector space
$\mathcal{W}$ can be decomposed into $(d+1)$ orthogonal subspaces;
that is, $\mathcal{W}=\mathcal{W}_{1}\oplus\cdots\oplus\mathcal{W}_{d+1}$.
Therefore, $\tilde{\rho}$ can be expressed as \begin{eqnarray*}
\tilde{\rho} & = & \sum_{i}\tilde{\rho}_{i},\end{eqnarray*}
where $\tilde{\rho}_{i}\in\mathcal{W}_{i}$. 

Let $r_{j}^{i}=\mbox{Tr}\left(\tilde{\rho}\tilde{\Pi}_{j}^{i}\right)$,
where the $r_{j}^{i}$s' determine the projection of $\tilde{\rho}$
onto the subspace $\mathcal{W}_{i}$ and are related to the probabilities
$t_{j}^{i}=\mbox{Tr}\left(\Pi_{j}^{i}\rho\right),j=1,...,d$, when
$\rho$ is measured in the $\Pi^{i}$ basis via the following relation:
\begin{eqnarray*}
r_{j}^{i} & = & t_{j}^{i}-\frac{1}{d}.\end{eqnarray*}
Note that $r_{j}^{i}$s' satisfy \begin{eqnarray*}
\sum_{j=1}^{d}r_{j}^{i} & = & 0,\end{eqnarray*}
by virtue of the fact that $\sum_{j}t_{j}^{i}=1$. Now observe that
$r_{j}^{i}=\mbox{Tr}\left(\tilde{\rho}\tilde{\Pi}_{j}^{i}\right)=\mbox{Tr}\left(\tilde{\rho}_{i}\tilde{\Pi}_{j}^{i}\right)$.
This follows from the facts that $\tilde{\rho}=\sum_{i}\tilde{\rho}_{i}$,
and $\mbox{Tr}\left(\tilde{\rho}_{k}\tilde{\Pi}_{j}^{i}\right)=0:\; k\neq i$.
Therefore, by Lemma 2 we can write $\tilde{\rho}_{i}$ as, \begin{eqnarray*}
\tilde{\rho}_{i} & = & \sum_{j}r_{j}^{i}\tilde{\Pi}_{j}^{i}.\end{eqnarray*}
Once again by Lemma 2, we obtain \begin{eqnarray*}
\mbox{Tr}\left(\tilde{\rho}_{i}^{2}\right) & = & \sum_{j}\left(r_{j}^{i}\right)^{2}.\end{eqnarray*}
Since $\mbox{Tr}\left(\tilde{\rho}\right)^{2}$ is simply the sum
of the squares of the lengths of the components of $\tilde{\rho}$
in the orthogonal subspaces $\mathcal{W}_{i}$, we have, \begin{eqnarray*}
\mbox{Tr}\left(\tilde{\rho}^{2}\right) & = & \sum_{ij}\left(r_{j}^{i}\right)^{2}.\end{eqnarray*}
For any density matrix $\rho$, the right hand side can be readily
evaluated:\begin{eqnarray}
\sum_{ij}\left(r_{j}^{i}\right)^{2} & = & \sum_{ij}\left(t_{j}^{i}\right)^{2}-\sum_{i}\frac{1}{d}\nonumber \\
 & = & \sum_{i}\left(\sum_{j}\left(t_{j}^{i}\right)^{2}-\frac{1}{d}\right).\label{proof-lemma-1}\end{eqnarray}
Observe that the quantity $\left(\sum_{j}\left(t_{j}^{i}\right)^{2}-\frac{1}{d}\right)=\sum_{j}\left(r_{j}^{i}\right)^{2}\geq0$. 

When $\rho$ corresponds to a pure state, then \begin{eqnarray*}
\mbox{Tr}\left(\tilde{\rho}^{2}\right) & = & \mbox{Tr}\left(\rho-\frac{\mathbb{I}}{d}\right)^{2}\\
 & = & 1-\frac{1}{d}.\end{eqnarray*}
To arrive at our result we simply note that when $i=1,...,N$, where
$N\leq d+1$, \begin{eqnarray}
\sum_{i=1}^{N}\sum_{j=1}^{d}\left(r_{j}^{i}\right)^{2} & \leq & 1-\frac{1}{d}.\label{lemma-1-proof}\end{eqnarray}
Using Eqs. (\ref{proof-lemma-1}) and (\ref{lemma-1-proof}) we therefore
obtain,\begin{eqnarray*}
\sum_{i=1}^{N}\sum_{j=1}^{d}\left(t_{j}^{i}\right)^{2} & \leq & \frac{N+d-1}{d}.\end{eqnarray*}
 The equality is reached only when the pure state lies in the union
of the subspaces $\mathcal{W}_{i}$, $i=1,...,N$. Also note that
when $N=d+1$ we get back the known result \cite{Larsen,klappenecker}.
This completes the proof of Lemma 1. 
\end{document}